\documentclass[lettersize,journal]{IEEEtran}
\IEEEoverridecommandlockouts

\usepackage[utf8]{inputenc} 
\usepackage[T1]{fontenc}
\usepackage{ifthen}
\usepackage{array}
\usepackage{mathrsfs}
\usepackage{amsthm}
\usepackage{multirow}
\usepackage{cite}
\usepackage{multicol}
\usepackage[dvipsnames]{xcolor}
\usepackage{pgfplots}
\pgfplotsset{compat=1.18}
\usepackage{comment}
\usepackage{braket}
\usepackage{stmaryrd}
\usepackage{mathtools}
\usepackage{bm}
\usepackage{amsmath,amssymb,amsfonts}
\usepackage{graphicx}
\usepackage{textcomp}
\usepackage[textsize=tiny]{todonotes}
\usepackage[hidelinks]{hyperref}
\usepackage{tikz}
\usetikzlibrary{patterns}
\usepackage{booktabs}
\usepackage[caption=false,font=footnotesize]{subfig}
\usepackage[acronym,shortcuts]{glossaries}

\def\BibTeX{{\rm B\kern-.05em{\sc i\kern-.025em b}\kern-.08em
    T\kern-.1667em\lower.7ex\hbox{E}\kern-.125emX}}
\definecolor{HP1}{RGB}{174,1,126}
\definecolor{HP2}{RGB}{222,70,155}

\definecolor{BB1}{RGB}{217,72,1}
\definecolor{BB2}{RGB}{241,105,19}
\definecolor{BB3}{RGB}{253,141,60}

\definecolor{GB1}{RGB}{33,113,181}
\definecolor{GB2}{RGB}{107,174,214}

\definecolor{QC1}{RGB}{117,107,177}

\definecolor{QD1}{RGB}{0,104,55}
\definecolor{QD2}{RGB}{35,139,69}
\definecolor{QD3}{RGB}{65,171,93}
\definecolor{QD4}{RGB}{116,216,138}

\definecolor{kit-green}{RGB}{0, 150, 130}
\colorlet{KITgreen}{kit-green}
\colorlet{kit-green100}{kit-green}
\colorlet{kit-green90}{kit-green!90!white}
\colorlet{kit-green80}{kit-green!80!white}
\colorlet{kit-green70}{kit-green!70!white}
\colorlet{kit-green60}{kit-green!60!white}
\colorlet{kit-green50}{kit-green!50!white}
\colorlet{kit-green40}{kit-green!40!white}
\colorlet{kit-green30}{kit-green!30!white}
\colorlet{kit-green25}{kit-green!25!white}
\colorlet{kit-green20}{kit-green!20!white}
\colorlet{kit-green15}{kit-green!15!white}
\colorlet{kit-green10}{kit-green!10!white}
\colorlet{kit-green5}{kit-green!5!white}

\definecolor{kit-blue}{RGB}{70, 100, 170}
\colorlet{KITblue}{kit-blue}
\colorlet{kit-blue100}{kit-blue}
\colorlet{kit-blue90}{kit-blue!90!white}
\colorlet{kit-blue80}{kit-blue!80!white}
\colorlet{kit-blue70}{kit-blue!70!white}
\colorlet{kit-blue60}{kit-blue!60!white}
\colorlet{kit-blue50}{kit-blue!50!white}
\colorlet{kit-blue40}{kit-blue!40!white}
\colorlet{kit-blue30}{kit-blue!30!white}
\colorlet{kit-blue25}{kit-blue!25!white}
\colorlet{kit-blue20}{kit-blue!20!white}
\colorlet{kit-blue15}{kit-blue!15!white}
\colorlet{kit-blue10}{kit-blue!10!white}
\colorlet{kit-blue5}{kit-blue!5!white}

\definecolor{kit-royalblue}{RGB}{0, 45, 76}
\colorlet{kit-royalblue100}{kit-royalblue}
\colorlet{kit-royalblue90}{kit-royalblue!90!white}
\colorlet{kit-royalblue80}{kit-royalblue!80!white}
\colorlet{kit-royalblue70}{kit-royalblue!70!white}
\colorlet{kit-royalblue60}{kit-royalblue!60!white}
\colorlet{kit-royalblue50}{kit-royalblue!50!white}
\colorlet{kit-royalblue40}{kit-royalblue!40!white}
\colorlet{kit-royalblue30}{kit-royalblue!30!white}
\colorlet{kit-royalblue25}{kit-royalblue!25!white}
\colorlet{kit-royalblue20}{kit-royalblue!20!white}
\colorlet{kit-royalblue15}{kit-royalblue!15!white}
\colorlet{kit-royalblue10}{kit-royalblue!10!white}
\colorlet{kit-royalblue5}{kit-royalblue!5!white}

\definecolor{kit-iceblue100}{RGB}{30, 53, 69}
\definecolor{kit-iceblue70}{RGB}{68, 94, 111}
\definecolor{kit-iceblue50}{RGB}{168, 185, 196}
\definecolor{kit-iceblue30}{RGB}{218, 225, 230}

\definecolor{kit-red}{RGB}{162, 34, 35}
\colorlet{KITred}{kit-red}
\colorlet{kit-red100}{kit-red}
\colorlet{kit-red90}{kit-red!90!white}
\colorlet{kit-red80}{kit-red!80!white}
\colorlet{kit-red70}{kit-red!70!white}
\colorlet{kit-red60}{kit-red!60!white}
\colorlet{kit-red50}{kit-red!50!white}
\colorlet{kit-red40}{kit-red!40!white}
\colorlet{kit-red30}{kit-red!30!white}
\colorlet{kit-red25}{kit-red!25!white}
\colorlet{kit-red20}{kit-red!20!white}
\colorlet{kit-red15}{kit-red!15!white}
\colorlet{kit-red10}{kit-red!10!white}
\colorlet{kit-red5}{kit-red!5!white}

\definecolor{kit-yellow}{RGB}{252, 229, 0}
\colorlet{KITyellow}{kit-yellow}
\colorlet{kit-yellow100}{kit-yellow}
\colorlet{kit-yellow90}{kit-yellow!90!white}
\colorlet{kit-yellow80}{kit-yellow!80!white}
\colorlet{kit-yellow70}{kit-yellow!70!white}
\colorlet{kit-yellow60}{kit-yellow!60!white}
\colorlet{kit-yellow50}{kit-yellow!50!white}
\colorlet{kit-yellow40}{kit-yellow!40!white}
\colorlet{kit-yellow30}{kit-yellow!30!white}
\colorlet{kit-yellow25}{kit-yellow!25!white}
\colorlet{kit-yellow20}{kit-yellow!20!white}
\colorlet{kit-yellow15}{kit-yellow!15!white}
\colorlet{kit-yellow10}{kit-yellow!10!white}
\colorlet{kit-yellow5}{kit-yellow!5!white}

\definecolor{kit-orange}{RGB}{223, 155, 27}
\colorlet{KITorange}{kit-orange}
\definecolor{kit-orange}{RGB}{223, 155, 27}
\colorlet{kit-orange100}{kit-orange}
\colorlet{kit-orange90}{kit-orange!90!white}
\colorlet{kit-orange80}{kit-orange!80!white}
\colorlet{kit-orange70}{kit-orange!70!white}
\colorlet{kit-orange60}{kit-orange!60!white}
\colorlet{kit-orange50}{kit-orange!50!white}
\colorlet{kit-orange40}{kit-orange!40!white}
\colorlet{kit-orange30}{kit-orange!30!white}
\colorlet{kit-orange25}{kit-orange!25!white}
\colorlet{kit-orange20}{kit-orange!20!white}
\colorlet{kit-orange15}{kit-orange!15!white}
\colorlet{kit-orange10}{kit-orange!10!white}
\colorlet{kit-orange5}{kit-orange!5!white}

\definecolor{kit-lightgreen}{RGB}{140, 182, 60}
\colorlet{KITlightgreen}{kit-lightgreen}
\colorlet{kit-lightgreen100}{kit-lightgreen}
\colorlet{kit-lightgreen90}{kit-lightgreen!90!white}
\colorlet{kit-lightgreen80}{kit-lightgreen!80!white}
\colorlet{kit-lightgreen70}{kit-lightgreen!70!white}
\colorlet{kit-lightgreen60}{kit-lightgreen!60!white}
\colorlet{kit-lightgreen50}{kit-lightgreen!50!white}
\colorlet{kit-lightgreen40}{kit-lightgreen!40!white}
\colorlet{kit-lightgreen30}{kit-lightgreen!30!white}
\colorlet{kit-lightgreen25}{kit-lightgreen!25!white}
\colorlet{kit-lightgreen20}{kit-lightgreen!20!white}
\colorlet{kit-lightgreen15}{kit-lightgreen!15!white}
\colorlet{kit-lightgreen10}{kit-lightgreen!10!white}
\colorlet{kit-lightgreen5}{kit-lightgreen!5!white}

\definecolor{kit-purple}{RGB}{163, 16, 124}
\colorlet{KITpurple}{kit-purple}
\colorlet{kit-purple100}{kit-purple}
\colorlet{kit-purple90}{kit-purple!90!white}
\colorlet{kit-purple80}{kit-purple!80!white}
\colorlet{kit-purple70}{kit-purple!70!white}
\colorlet{kit-purple60}{kit-purple!60!white}
\colorlet{kit-purple50}{kit-purple!50!white}
\colorlet{kit-purple40}{kit-purple!40!white}
\colorlet{kit-purple30}{kit-purple!30!white}
\colorlet{kit-purple25}{kit-purple!25!white}
\colorlet{kit-purple20}{kit-purple!20!white}
\colorlet{kit-purple15}{kit-purple!15!white}
\colorlet{kit-purple10}{kit-purple!10!white}
\colorlet{kit-purple5}{kit-purple!5!white}

\definecolor{kit-brown}{RGB}{167, 130, 46}
\colorlet{KITbrown}{kit-brown}
\colorlet{kit-brown100}{kit-brown}
\colorlet{kit-brown90}{kit-brown!90!white}
\colorlet{kit-brown80}{kit-brown!80!white}
\colorlet{kit-brown70}{kit-brown!70!white}
\colorlet{kit-brown60}{kit-brown!60!white}
\colorlet{kit-brown50}{kit-brown!50!white}
\colorlet{kit-brown40}{kit-brown!40!white}
\colorlet{kit-brown30}{kit-brown!30!white}
\colorlet{kit-brown25}{kit-brown!25!white}
\colorlet{kit-brown20}{kit-brown!20!white}
\colorlet{kit-brown15}{kit-brown!15!white}
\colorlet{kit-brown10}{kit-brown!10!white}
\colorlet{kit-brown5}{kit-brown!5!white}

\definecolor{kit-cyan}{RGB}{35, 161, 224}
\colorlet{KITcyan}{kit-cyan}
\colorlet{KITcyanblue}{kit-cyan}
\colorlet{kit-cyan100}{kit-cyan}
\colorlet{kit-cyan90}{kit-cyan!90!white}
\colorlet{kit-cyan80}{kit-cyan!80!white}
\colorlet{kit-cyan70}{kit-cyan!70!white}
\colorlet{kit-cyan60}{kit-cyan!60!white}
\colorlet{kit-cyan50}{kit-cyan!50!white}
\colorlet{kit-cyan40}{kit-cyan!40!white}
\colorlet{kit-cyan30}{kit-cyan!30!white}
\colorlet{kit-cyan25}{kit-cyan!25!white}
\colorlet{kit-cyan20}{kit-cyan!20!white}
\colorlet{kit-cyan15}{kit-cyan!15!white}
\colorlet{kit-cyan10}{kit-cyan!10!white}
\colorlet{kit-cyan5}{kit-cyan!5!white}

\definecolor{kit-gray}{RGB}{0, 0, 0}
\colorlet{KITgray}{kit-gray}
\colorlet{kit-gray100}{kit-gray}
\colorlet{kit-gray90}{kit-gray!90!white}
\colorlet{kit-gray80}{kit-gray!80!white}
\colorlet{kit-gray70}{kit-gray!70!white}
\colorlet{kit-gray60}{kit-gray!60!white}
\colorlet{kit-gray50}{kit-gray!50!white}
\colorlet{kit-gray40}{kit-gray!40!white}
\colorlet{kit-gray30}{kit-gray!30!white}
\colorlet{kit-gray25}{kit-gray!25!white}
\colorlet{kit-gray20}{kit-gray!20!white}
\colorlet{kit-gray15}{kit-gray!15!white}
\colorlet{kit-gray10}{kit-gray!10!white}
\colorlet{kit-gray5}{kit-gray!5!white}

\definecolor{KITpalegreen}{RGB}{130,190,60}
\colorlet{kit-maigreen100}{KITpalegreen}
\colorlet{kit-maigreen70}{KITpalegreen!70}
\colorlet{kit-maigreen50}{KITpalegreen!50}
\colorlet{kit-maigreen30}{KITpalegreen!30}
\colorlet{kit-maigreen15}{KITpalegreen!15}

\newcommand{\sym}[1]{#1}         
\newcommand{\F}{\mathbb{F}}
\newcommand{\Fl}{\F_{2^\ell}}
\newcommand{\Flx}{\F^{\times}_{2^\ell}}
\newcommand{\HX}{\mathbf{H}_{\mathrm{X}}}
\newcommand{\HZ}{\mathbf{H}_{\mathrm{Z}}}
\newcommand{\PX}{\mathbf{P}_{\mathrm{X}}}
\newcommand{\PZ}{\mathbf{P}_{\mathrm{Z}}}
\newcommand{\supp}{\mathrm{supp}}
\newcommand{\wt}{\mathrm{wt}}

\newcommand{\rk}{\mathrm{rank}}
\newcommand{\6}{\mathbf }

\usepackage{tikz}
\usetikzlibrary{arrows,automata, positioning}
\usetikzlibrary{calc, chains,
                fit,
                positioning,
                shapes}
\usetikzlibrary{patterns}       

\usepackage{pgfplots}
\pgfplotsset{compat=newest} 

\usetikzlibrary{matrix,fadings,decorations.pathmorphing,intersections}
\usepgfplotslibrary{fillbetween}
\pgfdeclarelayer{bg}
\pgfsetlayers{bg,main}
\makeatletter
\tikzset{use path/.code=\tikz@addmode{\pgfsyssoftpath@setcurrentpath#1}}

\definecolor{myblue}{RGB}{0,0,205}
\definecolor{mygreen}{RGB}{80,160,80}
\definecolor{myred}{RGB}{178,34,34}
\definecolor{mygray}{RGB}{211,211,211}
\definecolor{myyellow}{rgb}{1.0, 0.75, 0.0}
\definecolor{mygreen2}{rgb}{0.4, 0.69, 0.2}
\definecolor{myred2}{rgb}{0.89, 0.15, 0.21}
\definecolor{myblue2}{rgb}{0.0, 0.47, 0.75}

\theoremstyle{plain}
\newtheorem{The}{Theorem}

\newtheorem{Lem}{Lemma}
\newtheorem{Cor}{Corollary}

\newtheorem{Def}{Definition}

\usetikzlibrary{patterns}
\usepackage{scalerel}

\makeatletter
\renewcommand*\env@matrix[1][*\c@MaxMatrixCols c]{%
  \hskip -\arraycolsep
  \let\@ifnextchar\new@ifnextchar
  \array{#1}}
\makeatother

\usepackage[ruled, vlined, linesnumbered]{algorithm2e}

\SetCommentSty{mycommfont}

\newacronym{AWGN}{AWGN}{additive white Gaussian noise}
\newacronym{BER}{BER}{bit error rate}
\newacronym{BP}{BP}{belief propagation}
\newacronym{CPM}{CPM}{circulant permutation matrix}
\newacronym{BP2}{BP$2$}{binary belief propagation}
\newacronym{GB}{GB}{generalized bicycle}
\newacronym{HP}{HP}{hypergraph product}
\newacronym{MS}{MS}{Min-Sum}
\newacronym{BP+OSD}{BP$2+$OSD}{belief propagation plus ordered-statistics decoding}
\newacronym{BPSK}{BPSK}{binary phase-shift keying}
\newacronym{CAMEL}{CAMEL}{Cycle Assembling and Mitigating with EnsembLe decoding}
\newacronym{CSS}{CSS}{Calderbank--Shor--Steane}
\newacronym{SP}{SP}{sum-product}
\newacronym{DPM}{DPM}{dyadic permutation matrix}
\newacronym{LDPC}{LDPC}{low-density parity-check}
\newacronym{LLR}{LLR}{log-likelihood ratio}
\newacronym{QD}{QD}{quasi-dyadic}
\newacronym{QD-LDPC}{QD-LDPC}{quasi-dyadic low-density parity-check}
\newacronym{QLDPC}{QLDPC}{quantum low-density parity-check}
\newacronym{SNR}{SNR}{signal-to-noise ratio}
\newacronym{SC-LDPC}{SC-LDPC}{spatially coupled low-density parity-check}
\newacronym{TI}{TI}{time-invariant}
\newacronym
  [longplural={parity-check matrices},
   shortplural={PCMs}]
  {PCM}{PCM}{parity-check matrix}
\newacronym{QC}{QC}{quasi-cyclic}
\newacronym{QSC}{QSC}{quantum stabilizer code}
\newacronym{QEC}{QEC}{quantum error correction}
\newacronym{QECC}{QECC}{quantum error correcting code}
\newacronym{LER}{LER}{logical error rate}
\newacronym{IS}{IS}{intersecting subset}
\newacronym{BP4}{BP$4$}{quaternary belief propagation}
\newacronym{BB}{BB}{bivariate bicycle}
\newacronym{DC}{DC}{dual-containing}
\newacronym{VN}{VN}{variable node}
\newacronym{CN}{CN}{check node}

\begin{document}

\title{High-Rate Quasi-Dyadic Quantum LDPC Codes
\thanks{
The work of Alessio Baldelli was partially supported by Agenzia per la Cybersicurezza Nazionale (ACN) under the programme for promotion of XL cycle PhD research in cybersecurity (CUP I32B24001750005). 
Parts of this work were funded by the Deutsche Forschungsgemeinschaft (DFG, German
Research Foundation) –  Project 567557088.

Alessio Baldelli, Marco Baldi, and Massimo Battaglioni are with the Department of Information Engineering, Università Politecnica delle Marche, 60131 Ancona, Italy.
Sisi Miao and Laurent Schmalen are with Communications Engineering Lab (CEL), Karlsruhe Institute of Technology (KIT), 76187 Karlsruhe, Germany.

}
}

\author{
\IEEEauthorblockN{
    Alessio Baldelli,
    Sisi Miao, 
    Laurent Schmalen,
    Marco Baldi,
    Massimo Battaglioni \\
}
}

\maketitle

\begin{abstract}
    In this work, we propose a novel design of high-rate Calderbank--Shor--Steane (CSS) quantum low-density parity-check (QLDPC) codes based on quasi-dyadic matrices. The Tanner graphs associated with the two component classical codes used to construct the proposed CSS codes are characterized by a girth of at least $6$. 
    We also derive an exact count of length-$4$ cycles in the quaternary representation of the parity-check matrix.  Monte Carlo simulations under code-capacity and phenomenological noise models show that the proposed codes have competitive finite-length logical error rate performance compared with  state-of-the-art  codes in both these settings. 
\end{abstract}

\begin{IEEEkeywords}
    CSS codes, QLDPC codes, quasi-dyadic codes
\end{IEEEkeywords}

\section{Introduction}
\label{intro}

Quantum computers are highly susceptible to noise, making \acp{QECC} necessary for reliable and fault-tolerant quantum computation. Stabilizer codes~\cite{gottesman1997stabilizercodesquantumerror} are the most commonly studied \acp{QECC}.   
The most widely used stabilizer codes are the  \ac{CSS} codes~\cite{CSS, Steane_CSS}, where a \ac{QECC} is constructed from two classical component codes whose \acp{PCM} $\6H_{\mathrm{X}}$ and $\6H_{\mathrm{Z}}$ satisfy the \emph{orthogonality condition} $\6H_{\mathrm{X}}\6H_{\mathrm{Z}}^{\top}=\60$.

\Ac{QLDPC} codes have attracted considerable attention~\cite{sparse_quantum_MacKay, Hagiwara2007, Tillich_HGP, kovalev_Pryad_GBcodes, Panteleev_degenerate, Bravyi2024_BBcodes_Nature,kenta_CPM_2026}, also due to the higher encoding rates achieved by many modern \ac{QLDPC} constructions compared to topological codes~\cite{11_kitaev1997quantum, 15_freedman2001projective}. 
A challenge in designing \ac{CSS} \ac{QLDPC} codes is to preserve the orthogonality of $\HX$ and $\HZ$, while avoiding the length-$4$ cycles in their Tanner graphs, which are  detrimental to \ac{BP}-based decoding. 
Structured constructions satisfying these requirements include \ac{QC}~\cite{Hagiwara2007}, \ac{BB}~\cite{Bravyi2024_BBcodes_Nature}, \ac{CPM}-based~\cite{kenta_CPM_2026} codes and some instances of \ac{HP}~\cite{Tillich_HGP} and  \ac{GB}~\cite{kovalev_Pryad_GBcodes} codes.  
\ac{CSS} \ac{QLDPC} codes based on dyadic and \ac{QD} matrices were introduced in~\cite{dyadics}. 
Other \ac{DC} constructions have been proposed and further studied in~\cite{baldelli2025ISTC, baldelli2026DC, kirsten_QD}. 
In the \ac{DC} setting, we have $\HX=\HZ$, leading to unavoidable length-$4$ cycles~\cite{baldelli2026DC}. To the best of our knowledge, the only previous non-\ac{DC} \ac{CSS} \ac{QD} construction is based on the \ac{CAMEL} framework~\cite{baldelli2026ISIT}. 
In this setting, the component \acp{PCM} $\HX$ and $\HZ$ have girth $4$, but all length-$4$ cycles are confined to a single column of the \ac{PCM}. The resulting \ac{CSS} codes need a specialized ensemble decoder, which is not applicable to general \ac{CSS} codes.

In this work, we propose a novel \ac{QD} \ac{CSS} construction based on an affine-Frobenius permutation technique, which yields mutually orthogonal component \acp{PCM} whose Tanner graphs have girth at least $6$. 
We further derive an exact expression for the number of length-$4$ cycles in the quaternary representation of the \ac{PCM} of the \ac{CSS} code. 
The proposed codes achieve high rates due to the large rank deficiency of their component \ac{QD} \acp{PCM}, whose redundant syndrome constraints are beneficial under phenomenological noise. Monte Carlo simulations under code-capacity and phenomenological noise show competitive finite-length \ac{LER} performance against state-of-the-art \ac{QLDPC} code instances with comparable length, rate, and stabilizer generator weight.

\section{Notation and Preliminaries}
\label{sec:preli}

The set of integers between $a$ and $b$ (endpoints included) is denoted as $[a,b]$.
We denote row vectors with bold lowercase letters, e.g., $\6a$,  while we use bold uppercase letters, e.g., $\6A$, for matrices.  The all-zero vector (matrix) is denoted by $\60$. The Hamming weight (simply weight) of a binary vector is denoted as $\wt(\6a)$, while its support is $\supp(\6a)$. 

We denote by $\mathbb{N}$ the set of natural numbers, by $\F_2$ the binary field, and by $\Fl$ the finite field of order $2^\ell$. We further let $\Flx\coloneq\Fl\setminus\{0\}$ and denote by $\alpha$ a primitive element of $\Fl$. The operator $\oplus$ denotes the bitwise XOR of the binary representations of two integers. We define the \emph{fiber} of a map $\nu: \Fl \to \Fl$ and a target scalar $\sigma \in \Fl$ as the set of elements that produce $\sigma$, i.e.,
\begin{equation*}
    \nu^{-1}(\sigma) \coloneq \{ \lambda \in \Fl \, | \, \nu(\lambda) = \sigma \}. 
\end{equation*}

\subsection{Classical and Quantum Stabilizer Codes}
\label{subsec:cla_qua_codes}

A binary linear code $C[n,k,d]$ can be represented by a \ac{PCM} $\6H\in\F_2^{m\times n}$ such that $C := \ker(\6H)$. 
We define its rank deficiency as $r_{\mathrm d}\coloneq m-\rk(\6H)$.

\Acp{QSC} are the quantum counterparts of classical binary linear codes.
Within the framework of \acp{QSC}, the most studied subclass is that of \ac{CSS} codes.

\begin{Def}[\ac{CSS} Code]
Let $\HX\in\F_2^{m_{\mathrm X}\times n}$ and $\HZ\in\F_2^{m_{\mathrm Z}\times n}$ be the \acp{PCM} of two binary linear codes satisfying
\[
\HX\HZ^\top=\60.
\]
They define a \ac{CSS} code $\mathcal{C}\llbracket n,k,d\rrbracket$ with $k=n-\rk(\HX)-\rk(\HZ)$
encoded qubits, whose binary \ac{PCM} is
\begin{equation}
\label{eq:PCM_CSS}
    \6H \coloneq
    \begin{pmatrix}[c|c]
        \6H_{\mathrm{X}} & \60\\
        \60 & \6H_{\mathrm{Z}}
    \end{pmatrix}
\in \mathbb{F}_2^{(m_{\mathrm{X}} + m_{\mathrm{Z}}) \times 2n}.
\end{equation}
\end{Def}

The \ac{CSS} \ac{PCM} can be equivalently represented over 
$\mathbb{F}_4=\{0,1,\omega,\bar{\omega} \}$  
as
\begin{equation} 
\label{eq:CSS_codes_4}
    \6H \coloneq
    \begin{pmatrix}
        \omega\6H_{\mathrm{X}}\\
        \bar{\omega}\6H_{\mathrm{Z}}
    \end{pmatrix}
    \in \mathbb{F}_4^{(m_{\mathrm{X}} + m_{\mathrm{Z}}) \times n}.
\end{equation}
A sparse \ac{PCM} defines a \ac{LDPC} code and its associated Tanner graph. We denote by $g_{\mathrm X}$ and $g_{\mathrm Z}$ the girths of the Tanner graphs associated with $\HX$ and $\HZ$, respectively, and by $\delta_{\mathrm r}$ and $\delta_{\mathrm c}$ their average check and variable degrees.  The Tanner graph associated with~\eqref{eq:CSS_codes_4} necessarily contains length-$4$ cycles; their number is denoted as $n_{4}$.

\subsection{Dyadic and Quasi-Dyadic Matrices and Codes}
\label{subsec:dyadics}

\begin{Def} [Ring of Dyadic Matrices]
    Consider $\ell \in \mathbb{N}$. 
    We define the ring of dyadic matrices $\mathcal M_{\ell}(\mathbb F_2)$
    as the set of $2^\ell\times 2^\ell$ binary matrices structured as follows
    \begin{align*}
        \6M = 
        \begin{pmatrix}
        \6A & \6B \\
        \6B & \6A
        \end{pmatrix}, \quad \6A,\6B\in\mathcal M_{\ell-1}(\mathbb F_2).
    \end{align*}
    For $\ell = 0$, we have $\mathcal M_0(\mathbb F_2) := \mathbb{F}_2$.
\end{Def}

Any dyadic matrix $\mathbf{M} \in \mathcal{M}_{\ell}(\F_2)$ is fully determined by its first row $\6m = (m_{0,0}, \ldots, m_{0,2^{\ell}-1})$. 
In fact, for any $i, j \in [0, 2^{\ell}-1]$, it holds that $m_{i,j} = m_{0,\, i \oplus j}$~\cite{baldelli2026DC}. 
We call \emph{\ac{DPM}} any dyadic matrix whose first row has weight $1$.
The set of all the \acp{DPM} is ${\mathcal D_\ell(\mathbb F_2) = \{ \6D^{(0)}\!\coloneq \6I_{2^\ell\times2^\ell}, \6D^{(1)},\ldots, \6D^{(2^\ell-1)} \}\subseteq \mathcal M_\ell(\mathbb F_2)}$. The $r$-th row of  $\mathbf{D}^{(l)}$ has its unique $1$ in the column indexed by $r \oplus l$. 
In particular, each $\6D^{(l)} \in \mathcal{D}_\ell(\mathbb{F}_2)$ has the unique non-zero entry of the first row in the $l$-th position. 
A matrix is called \ac{QD} if it is composed only of dyadic blocks and a classical linear code is called \ac{QD} if it admits a generator matrix or a \ac{PCM} that is \ac{QD}. We represent each \ac{QD} matrix composed only of \acp{DPM} by an \emph{exponent matrix} $\6P$.

\begin{Def}[Exponent Matrix]
    Let $\6P \in \mathbb{F}_{2^\ell}^{w \times N}$. Each entry $p_{i,j} \in \mathbb{F}_{2^\ell}$ identifies the \ac{DPM} \(\6D^{(\psi(p_{i,j}))}\),  where ${\psi : \mathbb{F}_{2^\ell} \to [0,2^\ell-1]}$ is the bijection defined by
    \[
    \psi\!\left(\sum_{k=0}^{\ell-1} c_k \alpha^k\right) = \sum_{k=0}^{\ell-1} c_k 2^k,
    \qquad c_k \in \mathbb{F}_2.
    \]
    We further denote by $\lambda_t \coloneqq \psi^{-1}(t) \in \mathbb{F}_{2^\ell}$, $t\in[0,2^\ell-1]$, the unique field element associated with the index $t$.
\end{Def}
Since  $\psi(\beta + \gamma) = \psi(\beta) \oplus \psi(\gamma)$, the $r$-th row of  $\mathbf{D}^{(\psi(p_{i, j}))}$ has its unique non-zero element in column $\psi(p_{i, j}) \oplus r = \psi(p_{i, j} + \lambda_r)$. \emph{Lifting} replaces each entry of $\6P$ with the corresponding \ac{DPM}, to obtain the \ac{PCM} $\6H$. We define the $i$-th \emph{block-row} as the sequence of blocks in $\6H$ associated with $p_{i,0},\ldots,p_{i,N-1}$, and the $j$-th \emph{block-column} as the sequence of blocks in $\6H$ associated with $p_{0,j},\ldots,p_{w-1,j}$. Moreover, we define the \emph{Frobenius map} $\mu: \Fl \to \Fl$ as $\mu(\lambda) = \lambda^2$. Note that $\mu$ is linear in $\F_2$ and  is a bijection.

\subsection{Noise Models}
\label{subsec:noise_models}
We evaluate the proposed \ac{CSS} codes under code-capacity and phenomenological noise. Under code-capacity depolarizing noise, syndrome measurements are perfect, and each data qubit independently undergoes an $\sym{X}$, $\sym{Y}$, or $\sym{Z}$ error with probability $\epsilon/3$. Under phenomenological noise, syndrome readout errors additionally occur independently with probability $p$. We consider the one-round single-shot setting of~\cite[Sec.~5]{Ostrev2024classicalproduct}, where a potentially redundant set of syndromes is measured once and the linear dependencies among the rows of the \ac{PCM} provide additional parity constraints on the measured syndrome. For the $\sym{X}$ component, the binary decoder uses the extended \ac{PCM}
\[
\widetilde{\6H}_{\mathrm{X}}
\coloneq
\begin{pmatrix}[c|c]
    \HX & \6I\\
    \60 & \6L_{\mathrm{X}}
\end{pmatrix},
\]
where $\6L_{\mathrm{X}}\in\F_2^{r_{\mathrm{d},\mathrm{X}}\times m_{\mathrm{X}}}$ is a full-row-rank \emph{meta-check matrix} satisfying
\(\ker(\6L_{\mathrm{X}})=\operatorname{row}(\HX^\top)
\),  and hence $\6L_{\mathrm{X}}\HX=\60$. 
If $r_{\mathrm{d},\mathrm{X}}=0$, $\6L_{\mathrm{X}}$ is empty. 
The minimum distance $d_{\text{m}, \mathrm{X}}$ of the classical code $\ker(\6L_{\mathrm{X}})$, referred to as the \emph{meta-check distance}, characterizes its capability to correct syndrome readout errors, and is upper bounded by the minimum column weight of $\HX$~\cite{Ostrev2024classicalproduct}. 
The same construction  applies to $\HZ$.
The extended \acp{PCM}  are then used to construct the extended \ac{PCM} for \ac{BP4} in the following form
\begin{equation}
    \widetilde{\6H}
\coloneq
\left(
\begin{array}{c|cc}
\omega \HX
    & \6I_{m_{\mathrm X}}
    & \60_{m_{\mathrm X}\times m_{\mathrm Z}} \\[1mm]
\60_{r_{\mathrm{d,X}}\times n}
    & \6L_{\mathrm X}
    & \60_{r_{\mathrm{d,X}}\times m_{\mathrm Z}} \\ 
\bar{\omega}\HZ
    & \60_{m_{\mathrm Z}\times m_{\mathrm X}}
    & \6I_{m_{\mathrm Z}} \\[1mm]
\60_{r_{\mathrm{d,Z}}\times n}
    & \60_{r_{\mathrm{d,Z}}\times m_{\mathrm X}}
    & \6L_{\mathrm Z}
\end{array}
\right).
\label{eq:Htilde}
\end{equation}
Eq. \eqref{eq:Htilde} defines a hybrid Tanner graph. The variable nodes associated with its left block represent quaternary Pauli errors and employ the \ac{BP4} update rule, whereas those associated with its right block represent binary syndrome readout errors and use the standard binary log-likelihood-ratio update~\cite{Ostrev2024classicalproduct}.

\section{Code Design}
\label{sec:design}

The parameter $\ell \in \mathbb{N}$  specifies both the field $\Fl$ and the size of the \acp{DPM}.  
Moreover, let \mbox{$N \coloneq 2^{\ell}$}, $w_{\mathrm{X}}, w_{\mathrm{Z}} \in [2, 2^{\ell}-1]$.
We define two exponent matrices $\PX \in \Fl^{w_{\mathrm{X}} \times N}$ and $\PZ \in \Fl^{w_{\mathrm{Z}} \times N}$ whose rows are given by the following permutations of $\Fl$:
\begin{equation}
\label{eq:aff_perm_XZ}
    p^{(\mathrm{X})}_{u,j} = a_u \lambda_j + b_u,
    \qquad\qquad
    p^{(\mathrm{Z})}_{v,j} = c_v \lambda_j^2 + d_v,
\end{equation}
where $u \in [0, w_{\mathrm{X}}-1]$, $v \in [0, w_{\mathrm{Z}}-1]$, $j \in [0, 2^{\ell}-1]$, and $\lambda_j=\psi^{-1}(j)$. The multipliers $a_u\in\Flx$ are pairwise distinct, as are the multipliers $c_v\in\Flx$, while $b_u,d_v\in\Fl$ are arbitrary.
No condition is imposed between the two sets of multipliers. We  refer to \eqref{eq:aff_perm_XZ} as the \emph{affine-Frobenius} construction.

\subsection{Girth of the Component Tanner Graphs}
We show next that the construction in~\eqref{eq:aff_perm_XZ} leads to a pair of exponent matrices $\PX$ and $\PZ$ defining two \ac{QD} codes whose Tanner graphs have girth at least $6$.

\begin{The}[Girth condition]
\label{theo:girth}
    Let $\PX \in \Fl^{w_{\mathrm{X}} \times N}$ and $\PZ \in \Fl^{w_{\mathrm{Z}} \times N}$ be two exponent matrices obtained with the affine-Frobenius construction of~\eqref{eq:aff_perm_XZ}.
    After lifting with \acp{DPM} of size $2^{\ell}$, we get the two \acp{PCM} $\HX \in \mathbb F_2^{w_{\mathrm{X}}2^{\ell} \times N2^{\ell}}$ and $\HZ \in \mathbb F_2^{w_{\mathrm{Z}}2^{\ell} \times N2^{\ell}}$
    Then,  the Tanner graphs of  both $\HX$ and $\HZ$ have girth at least $6$.
\end{The}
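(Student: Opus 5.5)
Girth at least $6$ means there is no length-$4$ cycle in the Tanner graph. So the plan is to show that no two distinct rows of $\HX$ (resp.\ $\HZ$) share two or more nonzero positions. Since the Tanner graph is bipartite, there are no odd cycles, and a length-$4$ cycle is exactly two rows overlapping in two columns. I would first translate this condition into the exponent-matrix language.

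Row $r$ of block-row $u$ has, in block-column $j$, its unique $1$ at position $\psi(p_{u,j}+\lambda_r)$, by the remark after the definition of the exponent matrix. Each block is a \ac{DPM}, so two rows of the same block-row never share a column. Hence any shared column comes from two rows $(u,r)$ and $(u',r')$ with $u\neq u'$. They overlap in block-column $j$ exactly when
\[
p_{u,j}+\lambda_r = p_{u',j}+\lambda_{r'},
\]
that is, $p_{u,j}+p_{u',j}=\lambda_r+\lambda_{r'}$, using characteristic $2$.

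A length-$4$ cycle therefore requires two distinct block-columns $j\neq j'$ with
\[
p_{u,j}+p_{u',j}=p_{u,j'}+p_{u',j'}.
\]
Equivalently, the map $\nu_{u,u'}(\lambda_j)\coloneq p_{u,j}+p_{u',j}$ must take the same value at two distinct field elements. So it suffices to show that, for every pair $u\neq u'$, this difference map is injective on $\Fl$, i.e., every fiber $\nu^{-1}(\sigma)$ has at most one element. The same-block-column case $j=j'$ cannot produce a cycle, because each block-column pair gives at most one shared position per row pair.

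For $\PX$, $\nu(\lambda)=(a_u+a_{u'})\lambda+(b_u+b_{u'})$. This is affine, and its slope $a_u+a_{u'}$ is nonzero because the multipliers are pairwise distinct, so $\nu$ is a bijection. For $\PZ$, $\nu(\lambda)=(c_v+c_{v'})\lambda^2+(d_v+d_{v'})$, which is the composition of the Frobenius map $\mu$ (a bijection) with an affine map of nonzero slope, so it is also a bijection. In both cases every fiber is a singleton, which rules out $j\neq j'$ and gives girth at least $6$.

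The main obstacle is the bookkeeping step of correctly identifying 4-cycles with the fiber condition. In particular, one must confirm that the indices $r,r'$ are determined consistently across the two block-columns, so that the shared quantity $\lambda_r+\lambda_{r'}$ forces equality of the two differences. The algebra after that is immediate.
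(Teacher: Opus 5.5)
Your proposal is correct and follows the paper's proof: in both, the key step is that $j\mapsto p_{u,j}+p_{u',j}$ is a bijection, being affine with nonzero slope $a_u+a_{u'}$ for $\PX$ and the Frobenius map composed with such an affine map for $\PZ$. The only difference is that the paper cites \cite[Theorem~1]{baldelli2026ISIT} for the criterion that distinct pairwise sums exclude length-$4$ cycles, whereas you derive that criterion directly from the dyadic permutation structure, which makes your argument self-contained.
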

\begin{IEEEproof}
    The result for $\HX$  follows directly from~\cite[Theorem~1]{baldelli2026ISIT}. Indeed, for any $u \neq u'$, 
    \[
    j \mapsto p^{(\mathrm{X})}_{u, j} + p^{(\mathrm{X})}_{u', j} = (a_u + a_{u'})\lambda_j + (b_u + b_{u'})
    \]
    is a bijection from $[0,2^\ell-1]$ to $\Fl$, since $a_u+a_{u'}\neq 0$. 
    Hence, all values $p^{(\mathrm{X})}_{u,j} + p^{(\mathrm{X})}_{u',j}$ are distinct for distinct $j$.  For $\HZ$, for any $v\neq v'$, 
    \[
    j \mapsto p^{(\mathrm{Z})}_{v, j} + p^{(\mathrm{Z})}_{v', j} = (c_v+c_{v'})\lambda_j^2+(d_v+d_{v'})
    \]
    is a bijection from $[0,2^\ell-1]$ to  $\Fl$, since $c_v+c_{v'}\neq0$ and $\mu$ is bijective. Hence, all values  $p^{(\mathrm Z)}_{v,j}+p^{(\mathrm Z)}_{v',j}$ are distinct for distinct $j$, and~\cite[Theorem~1]{baldelli2026ISIT} excludes length-$4$ cycles.
\end{IEEEproof}

\subsection{Orthogonality}
\label{subsec:ortho}

Let us show that $\HX$ and $\HZ$ are mutually orthogonal.

\begin{Lem}[Overlap parity]
\label{theo:overlap}
    Fix the $u$-th block-row of $\HX$ and the $v$-th block-row of $\HZ$.  Let $\6u_r$ denote the $r$-th row within the former and $\6v_s$ the $s$-th row within the latter, with $r,s\in[0,2^\ell-1]$. Then
    \[
    |\supp(\6u_r)\cap\supp(\6v_s)|\in\{0,2\}.
    \]
\end{Lem}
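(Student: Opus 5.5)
The plan is to describe the support of each row explicitly using the exponent-matrix description, and then reduce the overlap count to counting roots of a quadratic polynomial over $\Fl$.

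First, fix the row $\6u_r$ of the $u$-th block-row of $\HX$. In block-column $j$, the block is the \ac{DPM} $\6D^{(\psi(p^{(\mathrm X)}_{u,j}))}$. By the remark after the exponent-matrix definition, its $r$-th row has its unique $1$ in local column $\psi(p^{(\mathrm X)}_{u,j}+\lambda_r)$. Hence $\6u_r$ has exactly one nonzero entry in each block-column $j$, at local position $\psi(a_u\lambda_j+b_u+\lambda_r)$. Likewise, $\6v_s$ has exactly one nonzero entry in each block-column $j$, at local position $\psi(c_v\lambda_j^2+d_v+\lambda_s)$.

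Because each row has exactly one $1$ per block-column, the overlap $|\supp(\6u_r)\cap\supp(\6v_s)|$ equals the number of $j\in[0,2^\ell-1]$ for which these two local positions coincide. Since $\psi$ is a bijection, this is the number of $j$ with
\[
c_v\lambda_j^2 + a_u\lambda_j + (b_u+d_v+\lambda_r+\lambda_s) = 0 .
\]
As $j$ ranges over $[0,2^\ell-1]$, $\lambda_j=\psi^{-1}(j)$ ranges bijectively over all of $\Fl$. The overlap therefore equals the number of roots $x\in\Fl$ of the polynomial $f(x)=c_vx^2+a_ux+e$, where $e\coloneq b_u+d_v+\lambda_r+\lambda_s$.

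The main step is to show that $f$ has either zero or two roots in $\Fl$, never exactly one. Since $c_v\neq0$, $f$ has degree $2$, so it has at most two roots. Suppose $x_0$ is a root. In characteristic $2$, the roots of $f$ sum to $a_u/c_v$, so the other root is $x_0+a_u/c_v$. I will verify this directly: $f(x_0+\delta)=f(x_0)+c_v\delta^2+a_u\delta$, which vanishes at $\delta=a_u/c_v$. Since $a_u\in\Flx$, we have $a_u/c_v\neq0$, so this second root is distinct from $x_0$. Equivalently, $f'(x)=a_u\neq0$, so $f$ is separable.

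Consequently, the number of roots is $0$ or $2$, and the claim follows. The only delicate point is the bookkeeping showing that the overlap is exactly the root count. This relies on each row having weight exactly one per block-column and on $j\mapsto\lambda_j$ being a bijection onto $\Fl$, both of which are immediate from the definitions.
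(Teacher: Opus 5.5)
Your proof is correct and follows essentially the same route as the paper: both reduce the overlap to the number of $\lambda\in\Fl$ with $c_v\lambda^2+a_u\lambda=b_u+d_v+\lambda_r+\lambda_s$, and both show that the solutions come in pairs $\{\lambda_0,\lambda_0+a_u/c_v\}$. The paper phrases this as the fibers of the affine map $\Delta$ being cosets of $\ker(\mathcal{L})=\{0,\kappa\}$ with $\kappa=a_u/c_v$, whereas you use the degree-$2$ root bound and an explicit translation, which is the same argument in polynomial language.
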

\begin{IEEEproof}
    Fix a pair of rows $(\6u_r,\6v_s)$ as in the statement and define $\kappa\coloneq a_u/c_v\in\Flx$.
    In the block-column $j\in[0,2^\ell-1]$, $\6u_r$ has its unique $1$ at column $\psi\!\left(p^{(\mathrm{X})}_{u,j}+\lambda_r\right)$, while $\6v_s$ has its unique $1$ at column $\psi\!\left(p^{(\mathrm{Z})}_{v,j}+\lambda_s\right)$. Since $\psi$ is a bijection, the two rows overlap in the $j$-th block-column iff
    \begin{align*}
        p^{(\mathrm{X})}_{u,j}+p^{(\mathrm{Z})}_{v,j} &= \lambda_r+\lambda_s,\\
        a_u\lambda_j+c_v\lambda_j^2+(b_u+d_v) &= \lambda_r+\lambda_s.
    \end{align*}
    Then, define the map 
    $\Delta:\Fl\to\Fl$ by 
    \[
    \Delta(\lambda)\coloneq a_u\lambda+c_v\lambda^2+(b_u+d_v),
    \]
    whose linear part is
    \[
    \mathcal{L}(\lambda)\coloneq a_u\lambda+c_v\lambda^2=\lambda(a_u+c_v\lambda).
    \]
    Since $a_u,c_v\in\Flx$, we have $ \ker(\mathcal{L})=\{0,\kappa\}$.    Now, consider a target value $\sigma\in\Fl$. If $\sigma\notin\operatorname{Im}(\Delta)$, then $\Delta^{-1}(\sigma)=\varnothing$. Otherwise, let $\lambda_0\in\Fl$ be such that $\Delta(\lambda_0)=\sigma$. An element $\lambda\in\Fl$ belongs to the same fiber iff $\Delta(\lambda)=\Delta(\lambda_0)$. Since $\Delta(\lambda)=\mathcal{L}(\lambda)+(b_u+d_v)$ and $\mathcal{L}$ is $\F_2$-linear, we have
    \begin{align*}
    \Delta(\lambda)=\Delta(\lambda_0)
    &\iff \mathcal{L}(\lambda)=\mathcal{L}(\lambda_0)\\
    &\iff \lambda+\lambda_0\in\ker(\mathcal{L}).
    \end{align*}
    Therefore, the only elements belonging to the fiber are \(\lambda=\lambda_0\) or \(\lambda=\lambda_0+\kappa\). 
    Since $\kappa\neq0$,  every non-empty fiber of $\Delta$ can be written as
    \(
    \Delta^{-1}(\sigma)=\{\lambda_0,\lambda_0+\kappa\}
    \). 
    For the fixed pair $(\6u_r,\6v_s)$, the overlapping block-columns are determined by the solutions of
    \(
    \Delta(\lambda_j)=\lambda_r+\lambda_s
    \). 
    Since $j\mapsto\lambda_j$ is a bijection, this equation has either zero or two solutions in $j$. Each solution corresponds to exactly one common non-zero position, since each row has a unique $1$ in each block-column. Consequently,
    \[
    |\supp(\6u_r)\cap\supp(\6v_s)|\in\{0,2\}.
    \]\end{IEEEproof}

\begin{The}[CSS orthogonality]
\label{theo:orthogonality}
Let $\PX$ and $\PZ$ be designed as in~\eqref{eq:aff_perm_XZ}, and let $\HX$ and $\HZ$ be the associated  \acp{PCM}.
Then,  
\begin{equation*}
    \HX \HZ^{\top} = \60
\end{equation*}
for every choice of the constants $b_u,$ $d_v$. 
\end{The}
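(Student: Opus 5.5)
The plan is to reduce the matrix identity $\HX\HZ^{\top}=\60$ to a statement about pairs of individual rows and then apply Lemma~\ref{theo:overlap}. Over $\F_2$, the entry of $\HX\HZ^{\top}$ in the position indexed by a row $\6x$ of $\HX$ and a row $\6z$ of $\HZ$ is the inner product $\6x\6z^{\top}$. This inner product equals $|\supp(\6x)\cap\supp(\6z)| \bmod 2$. So it suffices to show that every row of $\HX$ and every row of $\HZ$ share an even number of nonzero positions.

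First, I would index the rows of $\HX$ by a pair $(u,r)$, where $u\in[0,w_{\mathrm X}-1]$ selects the block-row and $r\in[0,2^\ell-1]$ selects the row inside it. Likewise, the rows of $\HZ$ are indexed by $(v,s)$. This covers all rows, since each block-row of a lifted exponent matrix consists of exactly $2^\ell$ rows. For every such pair $(\6u_r,\6v_s)$, Lemma~\ref{theo:overlap} gives $|\supp(\6u_r)\cap\supp(\6v_s)|\in\{0,2\}$, which is even. Hence every entry of $\HX\HZ^{\top}$ vanishes modulo $2$.

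Second, I would check that the result does not depend on the constants $b_u,d_v$. The proof of Lemma~\ref{theo:overlap} uses only that $a_u,c_v\in\Flx$. The shift $b_u+d_v$ affects only which fibers of $\Delta$ are nonempty, not the fiber sizes, which are always $0$ or $|\ker(\mathcal L)|=2$. So the conclusion holds for arbitrary $b_u,d_v$, and also with no relation imposed between the multiplier sets $\{a_u\}$ and $\{c_v\}$.

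I do not expect a substantial obstacle, since all the algebraic content sits in Lemma~\ref{theo:overlap}. The only point needing care is the indexing: one must confirm that the row and column conventions of the \acp{DPM} used in the lemma match the actual lifted positions in $\HX$ and $\HZ$. Both matrices have $N=2^\ell$ block-columns of width $2^\ell$, so corresponding columns are the same qubit positions. Granting this, the proof is a one-line parity argument.
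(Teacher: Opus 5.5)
Your proposal is correct and takes the same route as the paper, whose proof simply derives $\HX\HZ^{\top}=\60$ from Lemma~\ref{theo:overlap}: each entry of the product is $|\supp(\6u_r)\cap\supp(\6v_s)|\in\{0,2\}$ reduced modulo $2$. Your remarks on the row indexing and on the independence from $b_u,d_v$ just spell out what the paper's one-line proof leaves implicit.
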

\begin{IEEEproof}
    The proof simply follows from Lemma~\ref{theo:overlap}.
\end{IEEEproof}

\subsection{Cycle Analysis}
\label{subsec:cycles}
Let us count the number of length-$4$ cycles in the quaternary representation of the \ac{PCM} $\6H$  of the proposed codes.

\begin{Lem}[Counting Overlapping Row Pairs]
\label{lem:counting_couples}
    Let $\HX$ and $\HZ$ be designed using the affine-Frobenius construction \eqref{eq:aff_perm_XZ} and lifted with \acp{DPM} of size $2^{\ell}\times2^{\ell}$. Then, for every pair of block-rows, one from $\HX$ and one from $\HZ$, exactly $2^{2\ell-1}$ of the $2^{2\ell}$ row pairs $(\6u_r,\6v_s)$ satisfy
    \[
    |\supp(\6u_r)\cap\supp(\6v_s)|=2,
    \]
    while all the remaining pairs satisfy
    \[
    |\supp(\6u_r)\cap\supp(\6v_s)|=0.
    \]
\end{Lem}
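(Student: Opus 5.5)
By Lemma~\ref{theo:overlap}, every pair $(\6u_r,\6v_s)$ has overlap $0$ or $2$. It therefore suffices to count the pairs with overlap $2$ and show there are exactly $2^{2\ell-1}$ of them.

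Fix the $u$-th block-row of $\HX$ and the $v$-th block-row of $\HZ$, and keep the map $\Delta(\lambda)=a_u\lambda+c_v\lambda^2+(b_u+d_v)$ from the proof of Lemma~\ref{theo:overlap}. That proof shows that $(\6u_r,\6v_s)$ overlaps if and only if $\lambda_r+\lambda_s\in\operatorname{Im}(\Delta)$. Since $\Delta$ is an affine map whose linear part $\mathcal{L}$ has kernel $\{0,\kappa\}$ of size $2$, rank--nullity over $\F_2$ gives $|\operatorname{Im}(\mathcal{L})|=2^{\ell-1}$. Hence $\operatorname{Im}(\Delta)=\operatorname{Im}(\mathcal{L})+(b_u+d_v)$ is a coset of size exactly $2^{\ell-1}$. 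As a check, the fibers computed in Lemma~\ref{theo:overlap} all have size $0$ or $2$ and partition the $2^\ell$ elements of $\Fl$, which again forces exactly $2^{\ell-1}$ nonempty fibers.

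Next I count the index pairs $(r,s)\in[0,2^\ell-1]^2$ with $\lambda_r+\lambda_s=\sigma$ for a fixed $\sigma\in\Fl$. Because $\psi$ is a bijection, for each $r$ there is exactly one $s$, namely the one with $\lambda_s=\lambda_r+\sigma$. This gives $2^\ell$ pairs per target $\sigma$, and the sets of pairs for different $\sigma$ are disjoint.

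Summing over the $2^{\ell-1}$ values $\sigma\in\operatorname{Im}(\Delta)$ gives $2^{\ell-1}\cdot 2^\ell=2^{2\ell-1}$ overlapping pairs, each with intersection size $2$ by Lemma~\ref{theo:overlap}. The remaining $2^{2\ell}-2^{2\ell-1}$ pairs have overlap $0$.

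No step is a real obstacle. The only point needing care is justifying $|\operatorname{Im}(\Delta)|=2^{\ell-1}$, which follows from $\mathcal{L}$ being $\F_2$-linear with a one-dimensional kernel, as already shown in Lemma~\ref{theo:overlap}.
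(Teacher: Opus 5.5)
Your proposal is correct and follows essentially the same route as the paper's proof. Both use Lemma~\ref{theo:overlap} to reduce to the condition $\lambda_r+\lambda_s\in\operatorname{Im}(\Delta)$, apply rank--nullity to $\mathcal{L}$ to get $|\operatorname{Im}(\Delta)|=2^{\ell-1}$, and double-count index pairs. The only difference is cosmetic: you sum over the targets $\sigma\in\operatorname{Im}(\Delta)$ (with $2^\ell$ pairs each), whereas the paper fixes $r$ and finds $2^{\ell-1}$ admissible values of $s$.
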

\begin{IEEEproof}
By Lemma~\ref{theo:overlap}, every row pair $(\6u_r,\6v_s)$ overlaps in either $0$ or $2$ positions, with overlap $2$ iff   
    \(
    r\oplus s\in\psi(\operatorname{Im}(\Delta))
    \), since $\lambda_r+\lambda_s=\lambda_{r\oplus s}$.
    By the rank-nullity theorem applied to the $\F_2$-linear part $\mathcal{L}$ of $\Delta$, we have
    \[
    \dim(\operatorname{Im}(\mathcal{L}))
    =
    \dim(\Fl)-\dim(\ker(\mathcal{L}))
    =
    \ell-1,
    \]
    since $\ker(\mathcal{L})=\{0,\kappa\}$ has dimension $1$ over $\F_2$. Moreover,
    \[
    \operatorname{Im}(\Delta)
    =
    (b_u+d_v)+\operatorname{Im}(\mathcal{L}),
    \]
    so that
    \(
    |\operatorname{Im}(\Delta)|=2^{\ell-1}
    \).
    For fixed $r$, the map $s\mapsto r\oplus s$ is a bijection, so exactly $2^{\ell-1}$ values of $s$ satisfy $r\oplus s\in\psi(\operatorname{Im}(\Delta))$. 
    Summing over the $2^{\ell}$ possible values of $r$ gives
    \(
    2^{2\ell-1}
    \)
    row pairs with exactly two overlapping positions.
\end{IEEEproof}

\begin{Cor}[Number of length-$4$ cycles]
   
    Let $\6H$ be the quaternary \ac{PCM} associated with $\HX$ and $\HZ$ obtained from~\eqref{eq:aff_perm_XZ} with lifting size $2^{\ell}$. Then, the number of length-$4$ cycles in its Tanner graph is 
    \begin{equation*}
        n_4 = w_{\mathrm{X}} w_{\mathrm{Z}} 2^{2\ell - 1}.
    \end{equation*}
\end{Cor}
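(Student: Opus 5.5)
The count follows from classifying which $4$-cycles can occur in the Tanner graph of the quaternary matrix~\eqref{eq:CSS_codes_4}. Lemma~\ref{lem:counting_couples} then supplies the number. A length-$4$ cycle consists of two check nodes and two variable nodes, all four pairs adjacent. Equivalently, it is an unordered pair of rows of $\6H$ whose supports share at least two positions. Each such row pair with overlap $t$ contributes $\binom{t}{2}$ cycles. So I will write
\[
n_4=\sum_{\{\6h,\6h'\}}\binom{|\supp(\6h)\cap\supp(\6h')|}{2},
\]
with the sum over unordered pairs of distinct rows of $\6H$. Since multiplying rows by $\omega$ or $\bar\omega$ does not change supports, the Tanner graph of $\6H$ is the union of those of $\HX$ and $\HZ$ on a shared set of variable nodes.

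I split the pairs into three classes: both rows from $\HX$, both from $\HZ$, and one from each. In the first two classes, Theorem~\ref{theo:girth} gives girth at least $6$ for the component Tanner graphs. Hence any two distinct rows of $\HX$ (or of $\HZ$) overlap in at most one position, and these pairs contribute nothing.

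For the mixed class I fix a block-row $u$ of $\HX$ and a block-row $v$ of $\HZ$. By Lemma~\ref{lem:counting_couples}, exactly $2^{2\ell-1}$ row pairs $(\6u_r,\6v_s)$ overlap in exactly $2$ positions, and every other pair has overlap $0$. Each overlapping pair therefore contributes $\binom{2}{2}=1$ cycle. Summing over the $w_{\mathrm X}w_{\mathrm Z}$ choices of $(u,v)$ gives $n_4=w_{\mathrm X}w_{\mathrm Z}2^{2\ell-1}$.

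The only real subtlety is the bookkeeping. I must make sure that a $4$-cycle corresponds to an unordered pair of checks together with an unordered pair of shared variables. This guarantees each mixed pair is counted once, not twice. I must also justify that the quaternary Tanner graph has no parallel edges or extra adjacency. This holds because $\omega\HX$ and $\bar\omega\HZ$ occupy disjoint rows, and each entry is either zero or nonzero exactly where $\HX$ or $\HZ$ is.
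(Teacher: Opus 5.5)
Your proof is correct and follows the paper's own argument: the girth theorem rules out cycles between two rows of the same component matrix, and Lemma~\ref{lem:counting_couples} counts the $2^{2\ell-1}$ mixed row pairs with overlap exactly $2$ in each of the $w_{\mathrm X}w_{\mathrm Z}$ block-row pairs, each giving one cycle. Your explicit sum of $\binom{t}{2}$ terms just states precisely the paper's remark that each such overlapping pair determines exactly one length-$4$ cycle.
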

\begin{IEEEproof}
By Theorem~\ref{theo:girth}, length-$4$ cycles cannot involve two rows both belonging to $\HX$ or both to $\HZ$. There are $w_{\mathrm X}w_{\mathrm Z}$ pairs of block-rows, and by Lemma~\ref{lem:counting_couples}, each contains exactly $2^{2\ell-1}$ row pairs overlapping in two positions. Each such pair determines one length-$4$ cycle, which gives the result.
\end{IEEEproof}

\section{Numerical Results}
\label{sec:num_res}

In this section, we show the finite-length \ac{LER} performance of the proposed  codes by Monte Carlo simulations, under both code-capacity and phenomenological noise, using a \ac{BP4} decoder as in~\cite{MiaoQuat}, running at most $50$ flooding-schedule iterations. For each value of the depolarizing probability $\epsilon$, the simulation stops after $100$ logical errors have been found.

We consider four high-rate \ac{QD} \ac{QLDPC} codes, constructed from \eqref{eq:aff_perm_XZ}. 
Their parameters are available in Tab.~\ref{tab:QD_parameters}. Their stabilizer generator weight is $\delta_{\text{r}} = N$, while the column weight $\delta_{\text{c}}$ of  $\6H_{\mathrm{X}}$ and $\6H_{\mathrm{Z}}$ is $w_{\mathrm{X}}$ and $w_{\mathrm{Z}}$, respectively.  
For all the \ac{QD} codes of Tab.~\ref{tab:QD_parameters}, we choose $a_{u}  = \alpha^{u}$, $c_{v}=\alpha^v$, and $b_{u} = d_{v} = 0$. 

\begin{table}[tb!]
\centering
\caption{Parameters of the proposed \ac{CSS} \ac{QD} \ac{QLDPC} codes and of benchmark \ac{CSS} \ac{QLDPC} codes.}
\label{tab:QD_parameters}
\label{tab:codes_parameters}
\footnotesize
\setlength{\tabcolsep}{2pt}
\resizebox{\columnwidth}{!}{
\begin{tabular}{c c c c c c c c c}
\toprule
\textbf{Code} & \multicolumn{7}{c}{\textbf{Parameters}} & \textbf{Ref.} \\
\cline{2-8}
& $\delta_{\mathrm{r}}$ & $\delta_{\mathrm{c}}$
& $r_{\mathrm{d,X}}/r_{\mathrm{d,Z}}$
& $d_{\mathrm{m,X}}/d_{\mathrm{m,Z}}$
& $g_{\mathrm{X}}/g_{\mathrm{Z}}$
& $n_4$
& $R$ & \\
\hline
$\mathcal{C}_{\mathrm{QD}1}\llbracket 64, 12, 8 \rrbracket$
& $8$ & $7/7$ & $30/30$ & $7/7$ & $6/6$ & $1\,568$ & $0.19$ & \\
$\mathcal{C}_{\mathrm{QD}2}\llbracket 64, 18, 8 \rrbracket$
& $8$ & $4/4$ & $9/9$ & $4/4$ & $6/6$ & $512$ & $0.28$ & \\
$\mathcal{C}_{\mathrm{QD}3}\llbracket 256, 96, 16 \rrbracket$
& $16$ & $15/15$ & $160/160$ & $15/15$ & $6/6$ & $28\,800$ & $0.38$ &  \\
$\mathcal{C}_{\mathrm{QD}4}\llbracket 256, 130, 8 \rrbracket$
& $16$ & $6/6$ & $33/33$ & $6/6$ & $6/6$ & $4\,608$ & $0.51$ &  \\
\hline
$\mathcal{C}_{\mathrm{HP}1}\llbracket 65, 9, 4 \rrbracket$
& $5.11$ & $2.20$ & $0/0$ & $-/-$ & $4/4$ & $235$ & $0.14$ & $\bullet$ \\
$\mathcal{C}_{\mathrm{HP}2}\llbracket 241, 121, 3 \rrbracket$
& $10.13$ & $2.52$ & $0/0$ & $-/-$ & $4/4$ & $2\,392$ & $0.50$ & \cite{table_QECCs} \\
$\mathcal{C}_{\mathrm{Bic}1}\llbracket 64, 12, 6 \rrbracket$
& $8$ & $3.25$ & $0/0$ & $-/-$ & $4/4$ & $1\,240$ & $0.19$ & $\bullet$ \\
$\mathcal{C}_{\mathrm{Bic}2}\llbracket 64, 18, 2 \rrbracket$
& $8$ & $2.88$ & $0/0$ & $-/-$ & $4/4$ & $1\,048$ & $0.28$ & $\bullet$ \\
$\mathcal{C}_{\mathrm{Bic}3}\llbracket 256, 96, 8 \rrbracket$
& $16$ & $5$ & $0/0$ & $-/-$ & $4/4$ & $18\,112$ & $0.38$ & $\bullet$ \\
$\mathcal{C}_{\mathrm{Bic}4}\llbracket 256, 130, 2 \rrbracket$
& $16$ & $3.94$ & $0/0$ & $-/-$ & $4/4$ & $14\,056$ & $0.51$ & $\bullet$ \\
$\mathcal{C}_{\mathrm{QC}}\llbracket 272, 142, 8 \rrbracket$
& $16$ & $4$ & $3/3$ & $2/2$ & $6/6$ & $2\,176$ & $0.52$ & \cite{baldelli2026XYZ} \\
$\mathcal{C}_{\mathrm{CPM}}\llbracket 276, 98, 14 \rrbracket$
& $12$ & $4$ & $3/3$ & $2/2$ & $6/6$ & $2\,208$ & $0.36$ & \cite{kenta_CPM_2026} \\
$\mathcal{C}_{\mathrm{GB}}\llbracket 48, 6, 8 \rrbracket$
& $8$ & $4$ & $3/3$ & $2/2$ & $4/4$ & $840$ & $0.13$ & \cite{Panteleev_degenerate} \\
$\mathcal{C}_{\mathrm{BB}1}\llbracket 72, 12, 6 \rrbracket$
& $6$ & $3$ & $6/6$ & $3/3$ & $6/6$ & $324$ & $0.17$ & \cite{Bravyi2024_BBcodes_Nature} \\
$\mathcal{C}_{\mathrm{BB}2}\llbracket 288, 12, 18 \rrbracket$
& $6$ & $3$ & $6/6$ & $2/2$ & $6/6$ & $1\,296$ & $0.04$ & \cite{Bravyi2024_BBcodes_Nature} \\
$\mathcal{C}_{\mathrm{D}0}\llbracket 65, 27, 4 \rrbracket$
& $9$ & $3.23$ & $5/5$ & $3/3$ & $4/4$ & $1\,632$ & $0.42$ & $\bullet$ \\
$\mathcal{C}_{\mathrm{D}1}\llbracket 257, 121, 10 \rrbracket$
& $17$ & $7.41$ & $44/44$ & $7/7$ & $4/4$ & $36\,736$ & $0.47$ & \cite{baldelli2026ISIT} \\
\bottomrule
\end{tabular}
}
\end{table}

For comparison, we select codes from state-of-the-art \ac{CSS} \ac{QLDPC}  families with block lengths, rates, and stabilizer generator weights as close as possible to ours, namely: 
bicycle codes~\cite{sparse_quantum_MacKay}; 
\ac{QC} codes~\cite{Hagiwara2007}; 
\ac{HP} codes~\cite{Tillich_HGP};
\ac{GB} codes~\cite{kovalev_Pryad_GBcodes}; 
\ac{BB} codes~\cite{Bravyi2024_BBcodes_Nature};
\ac{CPM}-based codes~\cite{kenta_CPM_2026}; and 
\ac{QD} CAMEL codes~\cite{baldelli2026ISIT}.  In particular, $\mathcal{C}_{\mathrm{D}0}$ and $\mathcal{C}_{\mathrm{D}1}$ are decoded using the CAMEL decoder for the code-capacity model. For the phenomenological model, they are decoded using \ac{BP4}, as for the other codes considered in this study.
The parameters of the benchmark codes are reported, where “$\bullet$” in the “\textbf{Ref.}” column indicates that, to the best of the authors' knowledge, the corresponding code instance has been presented in this work for the first time, while ``--'' denotes the absence of meta-checks ($r_{\mathrm{d}}=0$). In Tab.~\ref{tab:codes_parameters}, all the unknown quantum minimum distances are computed exactly if $n \leq 65$, otherwise they are numerically estimated using~\cite{Pryadko_2022}, which provides an upper bound.  
Similarly, the meta-check distances $d_{\text{m},\mathrm{X}}$ and $d_{\text{m},\mathrm{Z}}$ of the codes with $r_{\mathrm{d,X}} > 0$ and $r_{\mathrm{d,Z}} > 0$ are computed exactly, except for $\mathcal{C}_{\text{QD} 3}$, $\mathcal{C}_{\text{QD} 4}$, and $\mathcal{C}_{\text{D} 1}$. We numerically estimate the meta-check distances of the latter codes using~\cite{MacKayDist}, which yields an upper bound on the minimum distance, as well.
Note that, for the proposed \ac{QD} codes, $d_{\text{m},\mathrm{X}}$ and $d_{\text{m},\mathrm{Z}}$
are equal to $w_{\mathrm{X}}$ and $w_{\mathrm{Z}}$, respectively, thus attaining the theoretical upper bound of Sec.~\ref{subsec:noise_models}.
For each code instance, we use the native pairs of \acp{PCM} without adding or removing any linearly dependent rows.
As shown in~\cite{MiaoQuat}, the initialization of the \ac{BP} decoder strongly affects its performance; thus, for each simulated code, we optimize the initial prior by selecting the value that maximizes the physical error rate at which an \ac{LER} of \(10^{-3}\) is achieved, ensuring a fair comparison\footnote{The \acp{PCM} and the meta-check matrices of all the codes considered in this section, 
as well as the optimized decoder configurations, are available at \url{https://github.com/secomms/High_rate_QD_QLDPC_codes}.}.

\begin{figure*}[t]
    \centering
    \subfloat[]{
        \resizebox{0.9\columnwidth}{!}{
            \input{Figures/QDcodes_depo_short}
        }
        \label{fig:BP4_depo_short}
    }
    \subfloat[]{
        \resizebox{0.9\columnwidth}{!}{

\begin{tikzpicture}
    \begin{loglogaxis}[
        xlabel={Depolarizing probability $\epsilon$},
        ylabel={LER},
        xmin=0.001, xmax=0.10,
        ymin=0.0000001, ymax=1,
        legend style={
            legend columns=1, 
            font=\tiny,
        },
        grid=both,
        major grid style={solid, gray!70},
        minor grid style={solid, gray!30},
        legend pos=south east,
        height=7.5cm,
        width=1.1\columnwidth,
        scaled x ticks=false
    ]

    \addplot[
        color=HP2,
        mark=triangle*, thick, dashed,
        mark options={solid}
    ]
    coordinates {
        (0.1, 1)
        (0.09, 1)
        (0.08, 1)
        (0.07, 0.99951171875)
        (0.06, 0.99609375)
        (0.05, 0.98974609375)
        (0.04, 0.9462890625)
        (0.03, 0.865234375)
        (0.02, 0.65966796875)
        (0.01, 0.28515625)
        (0.009, 0.24365234375)
        (0.008, 0.21142578125)
        (0.007, 0.1728515625)
        (0.006, 0.13916015625)
        (0.005, 0.0888671875)
        (0.004, 0.06005859375)
        (0.003, 0.0358664772727)
        (0.002, 0.0192901234568)
        (0.001, 0.00423441734417)
    };

    \addplot[
        color=RedOrange,
        mark=triangle*, thick, dashed,
        mark options={solid}
    ]
    coordinates {
        (0.1, 0.97705078125)
        (0.09, 0.9462890625)
        (0.08, 0.8623046875)
        (0.07, 0.7216796875)
        (0.06, 0.4931640625)
        (0.05, 0.28564453125)
        (0.04, 0.107421875)
        (0.03, 0.0279017857143)
        (0.02, 0.00435236768802)
        (0.01, 0.000295312795313)
        (0.009, 0.00022767011511)
        (0.008, 0.000141749070126)
        (0.007, 9.48464246692e-05)
        (0.006, 4.68150767018e-05)
        (0.005, 2.95485920687e-05)
        (0.004, 1.81422351234e-05)
        (0.003, 7.26072147175e-06)
        (0.002, 2.26201437266e-06)
        (0.001, 3.9e-07)
    };
    
    \addplot[
        color=OrangeRed,
        mark=square*, thick, dashed, 
        mark options={solid}
    ]
    coordinates {
        (0.1, 0.9990234375)
        (0.09, 0.9990234375)
        (0.08, 0.99462890625)
        (0.07, 0.97412109375)
        (0.06, 0.9375)
        (0.05, 0.857421875)
        (0.04, 0.748046875)
        (0.03, 0.5830078125)
        (0.02, 0.41796875)
        (0.01, 0.2412109375)
        (0.009, 0.21484375)
        (0.008, 0.1962890625)
        (0.007, 0.16748046875)
        (0.006, 0.1474609375)
        (0.005, 0.133928571429)
        (0.004, 0.10039893617)
        (0.003, 0.0714962121212)
        (0.002, 0.0471875)
        (0.001, 0.0224702380952)
    };

    \addplot[
        color=QC1,
        mark=x, thick, dashed,
        mark options={solid}
    ]
    coordinates {
        (0.1, 0.99853515625)
        (0.09, 0.99609375)
        (0.08, 0.982421875)
        (0.07, 0.93798828125)
        (0.06, 0.7939453125)
        (0.05, 0.5732421875)
        (0.04, 0.30615234375)
        (0.03, 0.08837890625)
        (0.02, 0.009765625)
        (0.01, 0.000299043062201)
        (0.009, 0.000164197141656)
        (0.008, 0.000103696575524)
        (0.007, 5.80272588851e-05)
        (0.006, 2.44423239371e-05)
        (0.005, 1.25779834977e-05)
        (0.004, 4.91582533955e-06)
        (0.003, 1.24985801613e-06)
    };

    \addplot[
        color=Violet,
        mark=pentagon*, thick, dashed,
        mark options={solid}
    ]
    coordinates {
        (0.1, 0.9677734375)
        (0.09, 0.880859375)
        (0.08, 0.75390625)
        (0.07, 0.541015625)
        (0.06, 0.28515625)
        (0.05, 0.10546875)
        (0.04, 0.0250496031746)
        (0.03, 0.00324844074844)
        (0.02, 0.000225599191452)
        (0.01, 4.2060691222e-06)
        (0.009, 2.56702186035e-06)
        (0.008, 1.59012686414e-06)
        (0.007, 7.9e-07)
    };

    \addplot[
        color=BB3,
        mark=star, thick, dashed,
        mark options={solid}
    ]
    coordinates {
        (0.1, 0.12158203125)
        (0.09, 0.0372023809524)
        (0.08, 0.0097049689441)
        (0.07, 0.00210579514825)
        (0.06, 0.000667164816396)
        (0.05, 0.000413688112258)
        (0.04, 0.000399514190744)
        (0.03, 0.000202974798649)
        (0.02, 0.000143467082912)
        (0.01, 4.48775023695e-05)
        (0.009, 2.73829761133e-05)
        (0.008, 2.51193672331e-05)
        (0.007, 1.78007906399e-05)
        (0.006, 1.49335754564e-05)
        (0.005, 9.69719913858e-06)
        (0.004, 5.26181087115e-06)
        (0.003, 3.28592488533e-06)
        (0.002, 1.13297788645e-06)
    };

    \addplot[
        color = TealBlue,
        mark=*, thick, dashed,
        mark options={solid}
    ]
    coordinates {
        (0.1, 0.972056)
        (0.09, 0.940358)
        (0.08, 0.801047)
        (0.07, 0.649123)
        (0.06, 0.381288)
        (0.05, 0.174797)
        (0.04, 0.0504432)
        (0.03, 0.00594354)
        (0.02, 0.000305356)
        (0.018, 0.00015272)
        (0.016, 5.36207e-05)
        (0.014, 2.37239e-05)
        (0.012, 9.13681e-06)
        };

    \addplot[
        color=QD3,
        mark=o, thick, solid,
        mark options={solid}
    ]
    coordinates {
        (0.1, 0.75732421875)
        (0.09, 0.583984375)
        (0.08, 0.36181640625)
        (0.07, 0.171875)
        (0.06, 0.0576171875)
        (0.05, 0.015318627451)
        (0.04, 0.00178164196123)
        (0.03, 7.86480092616e-05)
        (0.02, 1.65345490063e-06)
    };

    \addplot[
        color=QD4,
        mark=*, thick, solid,
        mark options={solid}
    ]
    coordinates {
        (0.1, 0.97998046875)
        (0.09, 0.9462890625)
        (0.08, 0.88427734375)
        (0.07, 0.7353515625)
        (0.06, 0.47216796875)
        (0.05, 0.26025390625)
        (0.04, 0.08447265625)
        (0.03, 0.0139508928571)
        (0.02, 0.00129883624273)
        (0.01, 2.98255325647e-05)
        (0.009, 1.57114127702e-05)
        (0.008, 7.66115390461e-06)
        (0.007, 4.0747298943e-06)
        (0.006, 2.02345262531e-06)
        (0.005, 1.1230447342e-06)
    };

    \legend{
        {$\mathcal{C}_{\mathrm{HP} 2}$},
        {$\mathcal{C}_{\mathrm{Bic} 3}$},
        {$\mathcal{C}_{\mathrm{Bic} 4}$},
        {$\mathcal{C}_{\mathrm{QC}}$},
        {$\mathcal{C}_{\mathrm{CPM}}$},
        {$\mathcal{C}_{\mathrm{BB} 2}$},
        {$\mathcal{C}_{\mathrm{D} 1}$},
        {$\mathcal{C}_{\mathrm{QD} 3}$},
        {$\mathcal{C}_{\mathrm{QD} 4}$},
    }

    \end{loglogaxis}
\end{tikzpicture}
        }
        \label{fig:BP4_depo_long}
    }
    \caption{
    Comparison between the \ac{LER} of the proposed \ac{QD} \ac{QLDPC} codes and state-of-the-art code instances, as a function of the depolarizing probability $\epsilon$.
    (a): Short block-length codes with $n \approx 64$.
    (b): Moderate block-length codes with $n \approx 256$.}
    
    \label{fig:BP4_depo_long_short}
\end{figure*}

\begin{figure*}[t]
    \centering
    \vspace{-0.275cm}
    \subfloat[]{
        \resizebox{0.9\columnwidth}{!}{
            \begin{tikzpicture}
    \begin{loglogaxis}[
        xlabel={Physical data-error probability $\epsilon$},
        ylabel={LER},
        xmin=8e-4, xmax=0.1,
        ymin=1e-5, ymax=1.2,
        grid=both,
        major grid style={solid, gray!55},
        minor grid style={solid, gray!20},
        height  = 7.5cm,
        width = 1.1\columnwidth, 
        legend style={legend columns=2, font=\tiny},
        legend pos = south east, 
        scaled x ticks=false
    ]

    \addplot[
        color=HP1, mark=triangle*, thick, dashed,
        mark options={solid}
    ]
    coordinates {
        (0.001, 0.0220070422535)
        (0.002, 0.0416666666667)
        (0.003, 0.0571022727273)
        (0.004, 0.0897321428571)
        (0.005, 0.10107421875)
        (0.006, 0.11181640625)
        (0.007, 0.14404296875)
        (0.008, 0.1591796875)
        (0.009, 0.1845703125)
        (0.01, 0.1904296875)
        (0.02, 0.37841796875)
        (0.03, 0.49853515625)
        (0.04, 0.62548828125)
        (0.05, 0.724609375)
        (0.06, 0.8037109375)
        (0.07, 0.84765625)
        (0.08, 0.888671875)
        (0.09, 0.93408203125)
        (0.1, 0.93994140625)
    };

    \addplot[
        color=BurntOrange, 
        mark=triangle*, thick, densely dashed,
        mark options={solid}
    ]
    coordinates {
        (0.001, 0.00553902116402)
        (0.002, 0.00791139240506)
        (0.003, 0.0113379963899)
        (0.004, 0.0135281385281)
        (0.005, 0.0190340909091)
        (0.006, 0.0224820143885)
        (0.007, 0.0269396551724)
        (0.008, 0.0325255102041)
        (0.009, 0.0343406593407)
        (0.01, 0.038300304878)
        (0.02, 0.0853040540541)
        (0.03, 0.1591796875)
        (0.04, 0.22314453125)
        (0.05, 0.32470703125)
        (0.06, 0.44580078125)
        (0.07, 0.49853515625)
        (0.08, 0.61474609375)
        (0.09, 0.67578125)
        (0.1, 0.73779296875)
    };

    \addplot[
        color=Orange, 
        mark=square*, thick, densely dashed,
        mark options={solid}
    ]
    coordinates {
        (0.001, 0.0209731543624)
        (0.002, 0.0411184210526)
        (0.003, 0.0634375)
        (0.004, 0.0726744186047)
        (0.005, 0.111328125)
        (0.006, 0.1171875)
        (0.007, 0.1484375)
        (0.008, 0.15283203125)
        (0.009, 0.1787109375)
        (0.01, 0.19970703125)
        (0.02, 0.38623046875)
        (0.03, 0.52734375)
        (0.04, 0.6533203125)
        (0.05, 0.73974609375)
        (0.06, 0.8359375)
        (0.07, 0.87255859375)
        (0.08, 0.92138671875)
        (0.09, 0.9453125)
        (0.1, 0.962890625)
    };


    \addplot[
        color = GB1, 
        mark=diamond*, thick, solid, dashed,
        mark options={solid}
    ]
    coordinates {
        (0.001, 0.000489811912226)
        (0.002, 0.000643533772652)
        (0.003, 0.000840863453815)
        (0.004, 0.00106256375383)
        (0.005, 0.00115998515219)
        (0.006, 0.00157908034361)
        (0.007, 0.00221160651097)
        (0.008, 0.00260199833472)
        (0.009, 0.00261069340017)
        (0.01, 0.00318228105906)
        (0.02, 0.0137746710526)
        (0.03, 0.0378388554217)
        (0.04, 0.0657552083333)
        (0.05, 0.1171875)
        (0.06, 0.173828125)
        (0.07, 0.2568359375)
        (0.08, 0.34814453125)
        (0.09, 0.4248046875)
        (0.1, 0.509765625)
    };

    \addplot[
        color=BB1, 
        mark=star, thick, solid, dashed,
        mark options={solid}
    ]
    coordinates {
        (0.001, 0.00188366485835)
        (0.002, 0.00364644107351)
        (0.003, 0.00472768532526)
        (0.004, 0.00679347826087)
        (0.005, 0.00797193877551)
        (0.006, 0.00952743902439)
        (0.007, 0.011927480916)
        (0.008, 0.0128073770492)
        (0.009, 0.0142109728507)
        (0.01, 0.0144886363636)
        (0.02, 0.0343406593407)
        (0.03, 0.0634375)
        (0.04, 0.09375)
        (0.05, 0.1279296875)
        (0.06, 0.1748046875)
        (0.07, 0.2236328125)
        (0.08, 0.3017578125)
        (0.09, 0.38525390625)
        (0.1, 0.43994140625)
    };

    \addplot[
        color=Emerald, 
        mark = *, thick, solid, dashed,
        mark options={solid}
    ]
    coordinates {
      (0.001, 0.00167482349166)
      (0.002, 0.0032018442623)
      (0.003, 0.00509289439374)
      (0.004, 0.00680706521739)
      (0.005, 0.00875629194631)
      (0.006, 0.0104166666667)
      (0.007, 0.0137061403509)
      (0.008, 0.0170206971678)
      (0.009, 0.019131097561)
      (0.01, 0.020875)
      (0.02, 0.0574817518248)
      (0.03, 0.126008064516)
      (0.04, 0.217905405405)
      (0.05, 0.32373046875)
      (0.06, 0.41162109375)
      (0.07, 0.521484375)
      (0.08, 0.615234375)
      (0.09, 0.72265625)
      (0.1, 0.79150390625)
    };

    \addplot[
        color=QD1,
        mark=o, thick,
        mark options={solid}
    ]
    coordinates {
        (0.001, 4.1e-05)
        (0.002, 7.3e-05)
        (0.003, 0.000103033300363)
        (0.004, 0.000136617994229)
        (0.005, 0.000159243783123)
        (0.006, 0.000199451110544)
        (0.007, 0.000205227556314)
        (0.008, 0.000250220193771)
        (0.009, 0.000248785924688)
        (0.01, 0.000330687830688)
        (0.02, 0.00075355678804)
        (0.03, 0.00212729748128)
        (0.04, 0.00639059304703)
        (0.05, 0.0165343915344)
        (0.06, 0.0363372093023)
        (0.07, 0.0697916666667)
        (0.08, 0.12060546875)
        (0.09, 0.181640625)
        (0.1, 0.26416015625)
    };

    \addplot[
        color=QD2,
        mark=*, thick,
        mark options={solid}
    ]
    coordinates {
        (0.001, 0.000584002990095)
        (0.002, 0.000628266988339)
        (0.003, 0.000790738866397)
        (0.004, 0.000898504887867)
        (0.005, 0.00114468864469)
        (0.006, 0.00123176980686)
        (0.007, 0.00147684310019)
        (0.008, 0.00194704049844)
        (0.009, 0.0022826880935)
        (0.01, 0.00250400641026)
        (0.02, 0.014675817757)
        (0.03, 0.046875)
        (0.04, 0.099609375)
        (0.05, 0.21337890625)
        (0.06, 0.294921875)
        (0.07, 0.39990234375)
        (0.08, 0.5234375)
        (0.09, 0.63818359375)
        (0.1, 0.7275390625)
    };

    \legend{
        {$\mathcal{C}_{\mathrm{HP} 1}$}, 
        {$\mathcal{C}_{\mathrm{Bic} 1}$},
        {$\mathcal{C}_{\mathrm{Bic} 2}$}, 
        {$\mathcal{C}_{\mathrm{GB}}$}, 
        {$\mathcal{C}_{\mathrm{BB} 1}$}, 
        {$\mathcal{C}_{\mathrm{D} 0}$}, %
        {$\mathcal{C}_{\mathrm{QD} 1}$}, 
        {$\mathcal{C}_{\mathrm{QD} 2}$}, 
    }

    \end{loglogaxis}
\end{tikzpicture}
        }
        \label{fig:BP4_PL_short}
    }
    \subfloat[]{
        \resizebox{0.9\columnwidth}{!}{
            \begin{tikzpicture}
    \begin{loglogaxis}[
        xlabel={Physical data-error probability $\epsilon$},
        ylabel={LER},
        xmin=8e-4, xmax=1.00e-1,
        ymin=1e-6, ymax=1.2,
        grid=both,
        major grid style={solid, gray!55},
        minor grid style={solid, gray!20},
        height=7.5cm,
        width=1.1\columnwidth,
        legend style={legend columns=3, font=\tiny},
        legend pos=south east,
        scaled x ticks=false
    ]

    \addplot[
        color=HP2, mark=triangle*, thick, dashed,
        mark options={solid}
    ]
    coordinates {
        (0.001, 0.078125)
        (0.002, 0.123355263158)
        (0.003, 0.1767578125)
        (0.004, 0.2421875)
        (0.005, 0.28759765625)
        (0.006, 0.3408203125)
        (0.007, 0.39990234375)
        (0.008, 0.42041015625)
        (0.009, 0.4873046875)
        (0.01, 0.560546875)
        (0.02, 0.814453125)
        (0.03, 0.9462890625)
        (0.04, 0.9814453125)
        (0.05, 0.9970703125)
        (0.06, 0.99853515625)
        (0.07, 0.99951171875)
        (0.08, 1)
        (0.09, 1)
        (0.1, 1)
    };

    \addplot[
        color=RedOrange, mark=triangle*, thick, densely dashed,
        mark options={solid}
    ]
    coordinates {
        (0.001, 0.00425749318801)
        (0.002, 0.00713470319635)
        (0.003, 0.00856164383562)
        (0.004, 0.0102809446254)
        (0.005, 0.0131302521008)
        (0.006, 0.0165248691099)
        (0.007, 0.0189393939394)
        (0.008, 0.0231564748201)
        (0.009, 0.0239742366412)
        (0.01, 0.02525)
        (0.02, 0.058837890625)
        (0.03, 0.12548828125)
        (0.04, 0.26416015625)
        (0.05, 0.46044921875)
        (0.06, 0.65234375)
        (0.07, 0.83544921875)
        (0.08, 0.925048828125)
        (0.09, 0.97509765625)
        (0.1, 0.990966796875)
    };
    
    \addplot[
        color=OrangeRed, mark=square*, thick, densely dashed,
        mark options={solid}
    ]
    coordinates {
        (0.001, 0.078857421875)
        (0.002, 0.15869140625)
        (0.003, 0.220458984375)
        (0.004, 0.278564453125)
        (0.005, 0.3408203125)
        (0.006, 0.390625)
        (0.007, 0.453369140625)
        (0.008, 0.495361328125)
        (0.009, 0.54931640625)
        (0.01, 0.56884765625)
        (0.02, 0.8271484375)
        (0.03, 0.939208984375)
        (0.04, 0.97900390625)
        (0.05, 0.992431640625)
        (0.06, 0.998779296875)
        (0.07, 0.998779296875)
        (0.08, 1)
        (0.09, 1)
        (0.1, 1)
    };

    \addplot[
        color=QC1, mark=x, thick, dashed,
        mark options={solid}
    ]
    coordinates {
        (0.001, 0.0025283171521)
        (0.002, 0.00278768956289)
        (0.003, 0.00303398058252)
        (0.004, 0.00315656565657)
        (0.005, 0.00351123595506)
        (0.006, 0.00411184210526)
        (0.007, 0.00442008486563)
        (0.008, 0.00508957654723)
        (0.009, 0.00550176056338)
        (0.01, 0.00569216757741)
        (0.02, 0.0247802734375)
        (0.03, 0.1171875)
        (0.04, 0.35498046875)
        (0.05, 0.6171875)
        (0.06, 0.83642578125)
        (0.07, 0.93994140625)
        (0.08, 0.983642578125)
        (0.09, 0.995361328125)
        (0.1, 0.99853515625)
    };

    \addplot[
        color=Violet, mark=pentagon*, thick, dashed,
        mark options={solid}
    ]
    coordinates {
        (0.001, 0.00263935810811)
        (0.002, 0.0029572740113)
        (0.003, 0.00336255353319)
        (0.004, 0.00347995545657)
        (0.005, 0.00364219114219)
        (0.006, 0.0041335978836)
        (0.007, 0.00424592391304)
        (0.008, 0.00445156695157)
        (0.009, 0.00521694214876)
        (0.01, 0.00557837477798)
        (0.02, 0.00961538461538)
        (0.03, 0.0174479166667)
        (0.04, 0.0517003676471)
        (0.05, 0.1435546875)
        (0.06, 0.329833984375)
        (0.07, 0.560791015625)
        (0.08, 0.748779296875)
        (0.09, 0.884765625)
        (0.1, 0.958251953125)
    };

    \addplot[
        color=BB3, 
        mark=star, thick, dashed,
        mark options={solid}
    ]
    coordinates {
        (0.001, 0.0061394891945)
        (0.002, 0.0118514150943)
        (0.003, 0.0167112299465)
        (0.004, 0.0245881782946)
        (0.005, 0.0299107142857)
        (0.006, 0.0367005813953)
        (0.007, 0.0464348591549)
        (0.008, 0.0482336956522)
        (0.009, 0.0491727941176)
        (0.01, 0.063232421875)
        (0.02, 0.120849609375)
        (0.03, 0.1826171875)
        (0.04, 0.258544921875)
        (0.05, 0.36572265625)
        (0.06, 0.467041015625)
        (0.07, 0.60107421875)
        (0.08, 0.71337890625)
        (0.09, 0.818359375)
        (0.1, 0.89599609375)
    };

    \addplot[        
        color = TealBlue,
        mark=*, thick, dashed,
        mark options={solid}] coordinates {
          (0.001, 0.000346759875721)
          (0.002, 0.000593880653744)
          (0.003, 0.00104445187166)
          (0.004, 0.00127864157119)
          (0.005, 0.00173997772829)
          (0.006, 0.00217921896792)
          (0.007, 0.00221004243281)
          (0.008, 0.00256568144499)
          (0.009, 0.00319529652352)
          (0.01, 0.0035350678733)
          (0.02, 0.0048225308642)
          (0.03, 0.0121394230769)
          (0.04, 0.064453125)
          (0.05, 0.1943359375)
          (0.06, 0.45068359375)
          (0.07, 0.68212890625)
          (0.08, 0.84619140625)
          (0.09, 0.94140625)
          (0.1, 0.97998046875)
    };

    \addplot[
        color=QD3, 
        mark=o, thick, solid,
        mark options={solid}
    ]
    coordinates {
        (0.001, 8.87817720368e-06)
        (0.002, 1.50384020635e-05)
        (0.003, 2.27779640509e-05)
        (0.004, 3.11772530762e-05)
        (0.005, 3.31337649853e-05)
        (0.006, 4.03229275091e-05)
        (0.007, 3.91754558978e-05)
        (0.008, 4.4453611767e-05)
        (0.009, 4.58529380117e-05)
        (0.01, 4.56657703998e-05)
        (0.02, 6.53483152333e-05)
        (0.03, 0.000247165831795)
        (0.04, 0.00164994720169)
        (0.05, 0.0142477203647)
        (0.06, 0.0614853896104)
        (0.07, 0.177734375)
        (0.08, 0.33544921875)
        (0.09, 0.54150390625)
        (0.1, 0.73583984375)
    };

    \addplot[
        color=QD4, 
        mark=*, thick, solid,
        mark options={solid}
    ]
    coordinates {
        (0.001, 0.000217436682438)
        (0.002, 0.000393213656572)
        (0.003, 0.000619220607662)
        (0.004, 0.000864692861096)
        (0.005, 0.00101023706897)
        (0.006, 0.00123128447597)
        (0.007, 0.00133357041252)
        (0.008, 0.00173739807265)
        (0.009, 0.00164820675105)
        (0.01, 0.00179528916124)
        (0.02, 0.00463649851632)
        (0.03, 0.0211148648649)
        (0.04, 0.104166666667)
        (0.05, 0.29541015625)
        (0.06, 0.53955078125)
        (0.07, 0.7626953125)
        (0.08, 0.9072265625)
        (0.09, 0.958984375)
        (0.1, 0.99169921875)
    };

    \legend{
        {$\mathcal{C}_{\mathrm{HP} 2}$},
        {$\mathcal{C}_{\mathrm{Bic} 3}$},
        {$\mathcal{C}_{\mathrm{Bic} 4}$},
        {$\mathcal{C}_{\mathrm{QC}}$},
        {$\mathcal{C}_{\mathrm{CPM}}$},
        {$\mathcal{C}_{\mathrm{BB} 2}$},
        {$\mathcal{C}_{\mathrm{D} 1}$},
        {$\mathcal{C}_{\mathrm{QD} 3}$},
        {$\mathcal{C}_{\mathrm{QD} 4}$},
    }
    
    \end{loglogaxis}
\end{tikzpicture}
        }
        \label{fig:BP4_PL_long}
    }
    \caption{
    Comparison between the \ac{LER} of the proposed \ac{QD} \ac{QLDPC} codes and state-of-the-art code instances, as a function of the physical data-error probability $\epsilon$.  The syndrome readout error rate is set to $p=10^{-2}$. 
    (a): Short block-length codes with $n \approx 64$.
    (b): Moderate block-length codes with $n \approx 256$.
    }
    \label{fig:BP4_PL_long_short}
\end{figure*}

In Fig.~\ref{fig:BP4_depo_long_short}, we show the performance of the four \ac{QD} \ac{QLDPC} codes against state-of-the-art  codes under code-capacity noise.
Specifically, in Fig.~\ref{fig:BP4_depo_long_short}\subref{fig:BP4_depo_short}, we consider the two \ac{QD} codes with $n = 64$. We remark that the performance of $\mathcal{C}_{\text{QD} 1}$ and $\mathcal{C}_{\text{QD} 2}$ is better than that of all the considered benchmark codes.
In particular, for $\mathcal{C}_{\text{HP} 1}$, $\mathcal{C}_{\text{Bic} 1}$, $\mathcal{C}_{\text{Bic} 2}$, $\mathcal{C}_{\text{D} 0}$ and $\mathcal{C}_{\text{BB} 1}$,  this behavior is consistent with their lower minimum distance while, for $\mathcal{C}_{\text{GB}}$, it is consistent with the lower girth of its component \acp{PCM}, equal to $4$.
Moreover, in Fig.~\ref{fig:BP4_depo_long_short}\subref{fig:BP4_depo_long}, we collect the performance of the two  \ac{QD} codes with $n = 256$. 
We observe that $\mathcal{C}_{\text{QD} 3}$ outperforms $\mathcal{C}_{\text{Bic} 3}$ and $\mathcal{C}_{\text{CPM}}$, while $\mathcal{C}_{\text{QD} 4}$ outperforms  $\mathcal{C}_{\text{HP} 2}$ and $\mathcal{C}_{\text{Bic} 4}$, which is consistent with their lower minimum distance.
Although $\mathcal{C}_{\text{QC}}$ and $\mathcal{C}_{\text{QD} 4}$ have the same minimum distance, the latter performs better.
Moreover, considering $\mathcal{C}_{\text{D} 1}$, whose code rate lies between those of $\mathcal{C}_{\text{QD} 3}$ and $\mathcal{C}_{\text{QD} 4}$, we observe that its performance also lies between those of the two codes.
We also include $\mathcal{C}_{\text{BB}2}$ as a benchmark due to its relevance in the literature, despite its significantly lower stabilizer generator weight and code rate.
In particular, $\mathcal{C}_{\text{QD} 3}$ outperforms $\mathcal{C}_{\text{BB}2}$, which exhibits an error floor, for $\epsilon \leq 3.4 \cdot 10^{-2}.$\footnote{In~\cite{Bravyi2024_BBcodes_Nature}, \ac{BB} codes are decoded with a binary \ac{BP} plus ordered-statistics decoder, while here such codes are decoded with a plain \ac{BP4} decoder.}


In Fig.~\ref{fig:BP4_PL_long_short}, we show the performance of the  same codes under phenomenological noise.
The syndrome readout error rate is set to $10^{-2}$.
In Fig.~\ref{fig:BP4_PL_long_short}\subref{fig:BP4_PL_short}, we show the performance of the two \ac{QD} codes with $n = 64$, which outperform again the benchmark codes, consistently with their larger meta-check distance. The only code instances that have similar performance are $\mathcal{C}_{\text{QD} 2}$ and $\mathcal{C}_{\text{GB}}$; however, the \ac{QD} code has more than twice the rate of the \ac{GB} code, namely $0.28$ compared to $0.13$.
In Fig. \ref{fig:BP4_PL_long_short}\subref{fig:BP4_PL_long}, we show the behavior of the two longer \ac{QD} codes, with $n = 256$.
$\mathcal{C}_{\text{QD} 4}$ maintains better performance than $\mathcal{C}_{\text{HP} 2}$, $\mathcal{C}_{\text{Bic} 3}$, $\mathcal{C}_{\text{Bic} 4}$, and $\mathcal{C}_{\text{QC}}$, consistently with the small meta-check distances of these codes.  Notably, under the phenomenological noise model, $\mathcal{C}_{\text{QD}4}$ outperforms both $\mathcal{C}_{\text{CPM}}$ and $\mathcal{C}_{\text{BB}2}$, while also achieving performance competitive with $\mathcal{C}_{\text{D}1}$. This contrasts with the behavior observed under the (less realistic) code-capacity noise model. 
Finally, $\mathcal{C}_{\text{QD} 3}$ outperforms all  benchmark codes, consistently with its larger meta-check distance ($15$).

\section{Conclusions}
\label{sec:concl}

We have introduced a design method for high-rate \ac{CSS} \ac{QD} \ac{QLDPC} codes based on an affine-Frobenius construction.
The resulting  \ac{CSS} codes are characterized by a girth of at least $6$ in the component \acp{PCM}.
We have also derived a closed-form expression for the number of length-$4$ cycles in the quaternary Tanner graph of the designed codes.  Monte Carlo simulations show that the proposed \ac{QD} codes achieve \ac{LER} performance competitive with state-of-the-art \ac{QLDPC} code families in the considered finite-length regimes, under both code-capacity and phenomenological-noise.

\bibliographystyle{IEEEtran}
\bibliography{strings, References_short}

\end{document}